\newcommand{\channelobj}{\ensuremath{N}}
\newcommand{\channel}{\ensuremath{N}}
\newcommand{\channelset}{\ensuremath{N}}
\newcommand{\chvec}{\ensuremath{n}}
\newcommand{\channelcu}[1]{\ensuremath{#1}}
\newcommand{\channelprob}[2]{\channelset(#2|#1)}
\newcommand{\channelcuprob}[3]{\channelcu{#1}(#3|#2)}
\newcommand{\inputs}{\ensuremath{\mathcal{X}}}
\newcommand{\outputs}{\ensuremath{\mathcal{Y}}}
\newcommand{\assistset}{\ensuremath{D}}
\newcommand{\assistobj}{\ensuremath{D}}
\newcommand{\assist}{\assistobj}
\newcommand{\assistcu}[1]{\ensuremath{#1}}
\newcommand{\suc}{\ensuremath{\mathrm{Succ}}}
\newcommand{\boxprob}[2]{\assistset(#2 | #1)}
\newcommand{\boxprobcu}[3]{#1(#3 | #2)}
\newcommand{\altchannel}{\ensuremath{T}}
\newcommand{\altchannelM}{\ensuremath{M}}
\newcommand{\aspace}{\ensuremath{V_A}}
\newcommand{\bspace}{\ensuremath{V_B}}
\newcommand{\apovm}{\ensuremath{A}}
\newcommand{\bpovm}{\ensuremath{B}}
\newcommand{\quantclass}{\textnormal{Q}}
\newcommand{\quantclassn}[1]{\textnormal{Q}(#1)}
\newcommand{\quantclassbin}{\textnormal{Q}_b}
\newcommand{\nsclass}{\textnormal{NS}}
\newcommand{\loc}{\textnormal{loc}}
\newcommand{\mkw}[1]{\textbf{[#1 --mary]}}
\newcommand{\radius}{\textnormal{Rad}}
\newcommand{\diameter}{\textnormal{Diam}}
\newcommand{\Tr}{\textnormal{Tr}}
\def\({\left(}
\def\){\right)}
\newcommand{\commentout}[1]{}
\newtheorem{theorem}{Theorem}
\newtheorem{proposition}[theorem]{Proposition}
\newtheorem{lemma}[theorem]{Lemma}
\newtheorem{corollary}[theorem]{Corollary}
\newtheorem{definition}[theorem]{Definition}
\begin{document}

\title{On optimal entanglement assisted one-shot classical communication}

\author{Brett Hemenway}
\email{bhemen@umich.edu}
\affiliation{Mathematics Department, University of Michigan, Ann Arbor, MI  48109, USA}

\author{Carl A.~Miller}
\email{carlmi@umich.edu}
\affiliation{Dept.~of Electrical Engineering and Computer Science, University of Michigan, Ann Arbor, MI  48109, USA}

\author{Yaoyun Shi}
\email{shiyy@umich.edu}
\affiliation{Dept.~of Electrical Engineering and Computer Science, University of Michigan, Ann Arbor, MI  48109, USA}

\author{Mary Wootters}
\email{wootters@umich.edu}
\affiliation{Mathematics Department, University of Michigan, Ann Arbor, MI  48109, USA}

\date{\today}

\begin{abstract}

\noindent The \em one-shot success probability \em of a noisy classical channel for transmitting one classical 
bit is the optimal probability with which the bit can be sent via a single use of the channel.
Prevedel {\em et al.} ({\em PRL} {\bf 106}, 110505 (2011)) recently showed 
that for a specific channel, this quantity can be increased if the parties using the channel
share an entangled quantum state. We completely characterize the optimal entanglement-assisted
protocols in terms of the radius of a set of operators 
associated with the channel. This characterization can be used to construct optimal entanglement-assisted
protocols from the given classical channel and to prove the limit of such protocols. As an example, we show that the Prevedel {\em et al.}
protocol is optimal for two-qubit entanglement.
We also prove some simple upper bounds on the improvement that can 
be obtained from quantum and no-signaling correlations.
\end{abstract}

\pacs{03.67.Hk, 03.65.Ud}

\maketitle

Suppose that two parties, Alice and Bob, communicate over a noisy classical channel.
While there are many examples of how Alice and Bob may benefit when they upgrade to a quantum channel, 
examples in which  shared entanglement improves communication
over a classical channel have only recently been discovered~\cite{clmw:2010,prevedeletal,clmw:2011}.
That these examples exist at all is somewhat surprising, as neither shared entanglement~\cite{bsst:2002} nor the assistance of non-signaling correlations~\cite{clmw:2011} can increase the classical capacity of the channel.
So far, work in this direction has focused on the
the \em (one-shot) zero error capacity\em, which measures the number of messages Alice can send to Bob perfectly~\cite{clmw:2011,mat:2012,bei:2010,lmmor:2012}, and the related notion of the \em one-shot success probability\em~\cite{prevedeletal}, which is the best probability with which Alice can successfully send a single bit to Bob.

It is of interest to determine how shared entanglement affects these two quantities, 
as this will further our understanding of how resources from quantum mechanics 
can be used for communication.

\commentout{ 
In this paper, we consider the one-shot success probability.  
We give a concise formula for the entanglement-assisted one-shot success probability, and we also give more easily computable bounds for certain types of entanglement. 
}

Previous work on enhanced communication over a classical channel has focused on the assistance that can be provided by non-signaling correlations.
In this setting, both the zero error capacity and one-shot success probability can be written as the solution to linear programs~\cite{clmw:2011, mat:2012}.  
Some upper bounds are known for the entanglement assisted zero error capacity~\cite{bei:2010}; these bounds are often the best bounds available in the unassisted case, suggesting that there are strong limitations to the amount of assistance that entanglement can provide.
Much less is known about the limits of quantum assistance for the one-shot success probability.
In~\cite{prevedeletal}, Prevedel \em et al\em.  give an example of a channel where the unassisted success probability,
$\suc(\channelobj),$ the entanglement-assisted success probability $\suc_{\quantclass}(\channelobj)$, and the non-signaling assisted success probability $\suc_{\nsclass}(\channelobj)$
are all different.  
It is known that entanglement cannot be completely helpful: if $\suc(\channelobj)$ is less than one, then so is $\suc_{\quantclass}(\channelobj)$~\cite{clmw:2011}.  However, the size of the gap between them has remained unquantified.

We use two distinct approaches to quantify the extent to which entanglement can help Alice and Bob.  In our first approach, we derive a simple formula for $\suc_{\quantclass}(\channelobj)$ in terms of the dimension of the entanglement.  This formula, which is given by maximizing a quantity over a family of positive semidefinite operators, is easy to work with, and as an example of its applicability, we show that the protocol from~\cite{prevedeletal} is in fact optimal for their channel and for 2-dimensional entanglement assistance.  

While our first approach is quite general, it does not give a closed form for the success probability. Our second approach
obtains explicit closed-form upper bounds for the success probability.  
As a first step, we prove the following general bound on non-signaling assistance.
Let $r$ be the number of elements in the input alphabet of $\channel$.  Then,
\begin{equation}\label{eq:nsbound}
\frac{\suc_{\nsclass} ( \channelobj ) - \frac{1}{2}}{\suc ( \channelobj ) - \frac{1}{2} } \leq 2 - \frac{2}{r}.
\end{equation}
The quantity $(\suc(\channelobj) - \frac{1}{2} )$ measures the advantage that Alice and Bob have over a random strategy; thus,
\eqref{eq:nsbound} measures the additional advantage gained by non-signaling correlations.
Our proof of \eqref{eq:nsbound} uses the linear program characterization of $\suc_{\nsclass} ( \channelobj )$ from
\cite{mat:2012}.  
From this, we derive an upper bound on the amount of assistance from a binary quantum device; we use the fact that
any quantum correlation can be decomposed
into a probabilistic mixture of a local correlation and a non-signaling correlation (the concept of \textit{local fraction}).  
We show that both of these bounds are the best possible, in the sense that there are channels for which equality is achieved.

A common thread in both approaches above is the use
of the \textit{radius} of a subset of a normed vector space.
Our formula for $\suc_{\quantclass} ( \channelobj )$ depends on maximizing
the radius of a family of Hermitian operators.  In the second approach we
use a formula for $\suc_{\nsclass} ( \channelobj )$
(an alternate formulation of Proposition 14 from \cite{mat:2012}) which is expressed
in terms of the radius of
a particular set of vectors.  

\paragraph*{Notation and terminology.}

Throughout this paper, we assume that Alice is trying to transmit
a single bit to Bob across a classical channel.
Alice and Bob will have access to a two-part input output device $\assistobj$ (Figure \ref{fig:nsdevice}),
which may be classical, quantum, or implement an arbitrary non-signaling correlation.
\begin{figure}[h]
\begin{center}
\begin{tikzpicture}[scale=0.9]
\node (alice) at (-1,0) {Alice};
\node (p) at (0, 1) {$p$};
\node[draw, rectangle] (da) at (0, 0) {$\assistobj_A$};
\node (r) at (0, -1) {$r$};
\draw (r) edge[->] (da);
\draw (da) edge[->] (p);
\node (bob) at (5,0) {Bob};
\node (q) at (4, 1) {$q$};
\node[draw, rectangle] (db) at (4, 0) {$\assistobj_B$};
\node (s) at (4, -1) {$s$};
\draw (s) edge[->] (db);
\draw (db) edge[->] (q);
\draw (da) edge[dashed] (db);
\end{tikzpicture}
\end{center}
\vskip-0.1in
\caption{A two-part input output device.} 
\label{fig:nsdevice}
\end{figure}
Each two-part input output device $\assist$ gives rise to a \em correlation \em between Alice and Bob, given by 
\[\left\{ \boxprob{rs}{pq} \mid p \in \mathcal{P} , q \in \mathcal{Q},
r \in \mathcal{R} , s \in \mathcal{S} \right\}
\]
so that $\boxprob{rs}{pq}$ is the probability of outputs $p$ and $q$ given inputs $r$ and $s$.
We will abuse notation by identifying the device $\assist$ with the correlation it induces.

We say a device is \em non-signaling \em if the partial sums
$\sum_{p \in \mathcal{P}} \boxprob{rs}{pq}$
do not depend on $r$, and the partial sums
$\sum_{q \in \mathcal{Q}} \boxprob{rs}{pq}$
do not depend on $s$.
We say a non-signaling device $\assist$ is
\textit{quantum} if there exist Hibert spaces $\aspace$
and $\bspace$, families of POVMs $\{ \{ A_r^p \}_p \}_r$
$\{ \{ B_s^q \}_q \}_s$, and a density operator $\Lambda$
on $\aspace \otimes \bspace$ such that
$\boxprob{pq}{rs} = \Tr ( ( A_r^p \otimes B_s^q ) \Lambda ).$
The device is quantum with \em dimension $n$ \em if both $\aspace$ and
$\bspace$ are $n$-dimensional, and \em binary \em if the input and output
alphabets have size $2$.

A classical channel $\channelobj$ is given by a matrix of conditional probabilities
$\{ \channelprob{x}{y} \mid x \in \mathcal{X},
y \in \mathcal{Y} \}$, where $\channelprob{x}{y}$ is the probability of seeing an output $y \in \outputs$ given the input $x \in \inputs.$
%
For any channel $\channelobj$, let $\suc ( \channelobj )$ denote the maximum probability
with which a single bit can be sent across $\channelobj$ (without assistance).
Let $\suc ( \channelobj, \assist )$
denote the maximum probability for a single-bit transmission across
$\channelobj$ with the assistance of $\assist$.
If $\mathbf{S}$ is a set 
of two-part devices,
write
$\suc_\mathbf{S} ( \channelobj ) := \sup_{S \in \mathbf{S}} \suc
(\channelobj, S ).$
We will be concerned with three choices of $\mathbf{S}$.  
We consider the set $\nsclass$ of non-signaling devices; the sets $\quantclass$ and $\quantclassn{n}$ of quantum and $n$-dimensional quantum devices; and the set $\quantclassbin$ of binary quantum devices.

\paragraph*{General quantum devices.}

In this section, we derive a formula for $\suc_{\quantclassn{n}}(\channel)$, and give an example of how to use our formula.
We will use the \em radius \em of a finite set $\{ H_i\}_{i \in \mathcal{I}}$ of Hermitian operators
on a finite-dimensional Hilbert space $V$, defined by
$
	\radius \{ H_i \}_i := \min_C \max_i \left\| H_i - C \right\|
$
where the minimum is taken over all Hermitian operators $C$ on $V$.  
The following lemma, which is proved in section 1 of the supplementary
material, gives an alternative expression for
the radius.
\begin{lemma}
\label{radiuslemma}
For any finite set $\{ H_i \}_{i \in \mathcal{I}}$ of Hermitian operators
on a finite-dimensional Hilbert space $V$, the radius of $\{ H_i \}$ is
equal to
\[ 
\max_{\substack{\lambda_i \geq 0, \lambda'_i \geq 0 \\ \sum \lambda_i
= \sum \lambda'_i \\ \Tr ( \sum \lambda_i ) = 1/2}} \left[ \sum_{i \in \mathcal{I}}
\Tr \left( ( \lambda_i - \lambda'_i ) H_i  \right) \right].
\]
\end{lemma}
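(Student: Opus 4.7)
The plan is to prove Lemma \ref{radiuslemma} by semidefinite programming (SDP) duality; the lemma is essentially the dual formulation of the radius as a minimax.

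First, I would rewrite the radius as an SDP. Using the operator-norm characterization $\snorm{H_i - C} \leq r$ iff $r\mathbb{I} - (H_i - C) \succeq 0$ and $r\mathbb{I} + (H_i - C) \succeq 0$, one obtains the primal
\[
\radius \{H_i\}_i = \min_{C, r}\,r \quad \text{subject to } r\mathbb{I} - H_i + C \succeq 0,\ r\mathbb{I} + H_i - C \succeq 0 \text{ for all } i \in \mathcal{I},
\]
where the variables are a Hermitian operator $C$ on $V$ and a real number $r$.

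Next, I would form the Lagrangian, introducing PSD Lagrange multipliers $\lambda_i$ for the constraint $r\mathbb{I} - H_i + C \succeq 0$ and $\lambda'_i$ for $r\mathbb{I} + H_i - C \succeq 0$. After grouping terms, the Lagrangian takes the form
\[
r\Bigl(1 - \Tr \textstyle\sum_i (\lambda_i + \lambda'_i)\Bigr) - \Tr\Bigl(C\textstyle\sum_i (\lambda_i - \lambda'_i)\Bigr) + \sum_i \Tr\bigl((\lambda_i - \lambda'_i) H_i\bigr).
\]
Minimizing over $C$ (unconstrained Hermitian) forces $\sum_i \lambda_i = \sum_i \lambda'_i$, and minimizing over $r \in \real$ forces $\Tr \sum_i (\lambda_i + \lambda'_i) = 1$. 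Combining these two conditions gives $\Tr \sum_i \lambda_i = 1/2$, yielding exactly the dual expression claimed in the lemma.

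Finally, I would invoke strong duality. Slater's condition is immediate: take $C = 0$ and $r$ strictly larger than $\max_i \snorm{H_i}$ to strictly satisfy the primal PSD constraints, so there is no duality gap and the primal optimum equals the dual optimum. The main (very mild) obstacle is simply bookkeeping of the signs and of which multiplier couples to which constraint when passing from the minimax form to the clean expression with $\Tr(\sum_i \lambda_i) = 1/2$; there is no deeper conceptual difficulty, and the proof is essentially a textbook SDP duality exercise.
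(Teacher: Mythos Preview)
Your proposal is correct and follows essentially the same approach as the paper: express the radius as a semidefinite program in the variables $(C,r)$ and take the Lagrangian dual, with Slater's condition guaranteeing strong duality. Your version is in fact slightly more streamlined than the paper's, which first reduces WLOG to the case $0\in\{H_i\}$ and makes the substitution $Z=C+r\mathbb{I}$ before dualizing; you avoid both steps by dualizing directly in $(C,r)$.
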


Here, the maximization is over all Hermitian operators $\{ \lambda_i \}_{i \in
\mathcal{I} }$ and $\{ \lambda'_i \}_{i \in \mathcal{I}}$  on $V$ satisfying
the given constraints.

Using this lemma, we will prove the following theorem, which characterizes $\suc_{\quantclassn{n}}(\channel)$.

\begin{theorem}
\label{radiustheorem}
For any channel $\channel$, and any integer $n \geq 2$,
\begin{eqnarray*}
\suc_{\quantclassn{n}} ( \channel ) & = & \frac{1}{2} +
\max_{\{ B_y \}} \left( 
\radius \left\{ \sum_{y \in \outputs} 
\channelprob{x}{y} B_y \right\}_{x \in \inputs} \right),
\end{eqnarray*}
where the maximization is over all families
$\{ B_y \}_{y \in \outputs}$ of Hermitian operators
on $\mathbb{C}^n$ satisfying $0 \leq B_y \leq \mathbb{I}$.
\end{theorem}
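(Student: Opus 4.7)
The plan is to rewrite the entanglement-assisted success probability as an affine functional of certain operator ensembles on Bob's Hilbert space and then invoke Lemma \ref{radiuslemma}.

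First I would make the protocol explicit. A two-part device $\assistobj$ is used as follows: Alice feeds her bit $b$ into her part and obtains a channel input $x\in\inputs$, which she sends through $\channel$; Bob receives $y\in\outputs$, feeds it into his part, and outputs a guess $b'\in\{0,1\}$. Averaging over a uniform $b$,
$$
\suc(\channelobj,\assistobj)=\tfrac{1}{2}\sum_{b,x,y}\channelprob{x}{y}\,\boxprob{b,y}{x,b}.
$$
Specializing to an $n$-dimensional quantum device, the summand becomes $\Tr((A_b^x\otimes B_y^b)\,\quant)$, with POVMs $\{A_b^x\}_x$ (one family per $b$) on $\aspace=\mathbb{C}^n$, a binary POVM $\{B_y^0,B_y^1\}$ on $\bspace=\mathbb{C}^n$, and $\quant$ a density operator on $\mathbb{C}^n\otimes\mathbb{C}^n$.

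Now set $B_y:=B_y^1$ (so $0\le B_y\le\mathbb{I}$), $M_x:=\sum_y\channelprob{x}{y}B_y$, and $\sigma_x^{(b)}:=\Tr_\aspace\!\bigl((A_b^x\otimes\mathbb{I})\,\quant\bigr)$. Using $B_y^0=\mathbb{I}-B_y$ together with $\sum_x A_0^x=\mathbb{I}$, the success probability collapses to
$$
\suc(\channelobj,\assistobj)=\tfrac{1}{2}+\tfrac{1}{2}\sum_{x\in\inputs}\Tr\!\bigl((\sigma_x^{(1)}-\sigma_x^{(0)})\,M_x\bigr).
$$
Maximizing over the device therefore factors into an outer maximization over Hermitian $\{B_y\}$ with $0\le B_y\le\mathbb{I}$ and an inner maximization over admissible ensembles $\{\sigma_x^{(b)}\}$.

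Next I would characterize admissibility. For any $n$-dimensional device one has $\sigma_x^{(b)}\ge0$ and $\sum_x\sigma_x^{(0)}=\sum_x\sigma_x^{(1)}=\Tr_\aspace\quant=:\rho_B$, a density operator on $\mathbb{C}^n$. Conversely, given any pair of PSD families on $\mathbb{C}^n$ satisfying these constraints, take $\quant$ to be any purification of $\rho_B$ on $\mathbb{C}^n\otimes\mathbb{C}^n$ (which exists because $\rank\rho_B\le n$) and apply the Hughston--Jozsa--Wootters theorem once for each $b$ to extract POVMs $\{A_b^x\}_x$ on $\mathbb{C}^n$ that realize the desired $\sigma_x^{(b)}$.

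Finally, the substitution $\lambda_x=\sigma_x^{(1)}/2$, $\lambda_x'=\sigma_x^{(0)}/2$ turns the admissibility constraints into exactly those of Lemma \ref{radiuslemma} (nonnegativity, equal sums, total trace $1/2$), so the inner maximum equals $\radius\{M_x\}_{x\in\inputs}$; taking the outer maximum over $\{B_y\}$ gives the stated formula. The one delicate step is the converse in the ensemble characterization: a priori one might fear needing $\dim\aspace>n$ to realize a given ensemble on $\bspace=\mathbb{C}^n$, but HJW applied to a minimal-dimension purification shows that $\dim\aspace=n$ already suffices, matching the paper's definition of an $n$-dimensional device.
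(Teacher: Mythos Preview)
Your proof is correct and follows essentially the same route as the paper's: reduce the success probability to the form $\tfrac12+\tfrac12\sum_x\Tr\bigl((\sigma_x^{(1)}-\sigma_x^{(0)})M_x\bigr)$ via the steered operators on Bob's side, identify the constraints $\sigma_x^{(b)}\ge0$, $\sum_x\sigma_x^{(0)}=\sum_x\sigma_x^{(1)}$, $\Tr\sum_x\sigma_x^{(b)}=1$, and invoke Lemma~\ref{radiuslemma}. The only difference is cosmetic (your $B_y=B_y^1$ versus the paper's $B_y=B_y^0$) and that you explicitly justify the realizability step via the Hughston--Jozsa--Wootters theorem and check that $\dim\aspace=n$ suffices, whereas the paper simply asserts this.
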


\begin{proof}
Consider the following quantum-assisted protocol for transmitting
a single bit across $\channel$.
Alice and Bob possess a bipartite quantum system
represented by a density matrix $\Lambda$
on a Hilbert space $\aspace \otimes \bspace$.  
Alice wishes to transmit a message $a \in \{ 0, 1 \}$.
Depending the value of $a$, she applies
one of two possible POVMs $\{ \apovm_0^x \}_{x \in \inputs}$
or $\{ \apovm_1^x \}_{x \in \inputs}$ to $\aspace$ and sends the result
of the measurement to the channel $\channel$.  Bob receives
the output $y$ of the channel, and according to this output,
applies one of a family of binary POVMs
$\left\{ \{ \bpovm_y^0 , \bpovm_y^1 \} \right\}_{y \in \mathcal{Y}}$
to $\bspace$.  The result of this output is Bob's guess
at Alice's original message.

In order to compute the success probability for
this protocol, it is not necessary to know the
state $\Lambda$ or the operators $\{ A_a^x \}_{x,a}$:
it is only necessary to know the operators 
$\rho_a^x := \Tr_A \left[ ( \apovm_a^x \otimes \mathbb{I} )
\Lambda \right],$
which represent the state of Bob's quantum system
when the outcome of Alice's measurement is $x$.
These operators satisfy $\sum_x \rho_0^x = \sum_x \rho_1^x$
and $\Tr ( \rho_0^x ) = 1$, and, in fact, any family of operators
satisfying those two constraints can be induced by an appropriately
chosen state $\Lambda$ and appropriately chosen
POVMs $\{ \apovm_0^x \}_{x \in \inputs}$
or $\{ \apovm_1^x \}_{x \in \inputs}$.  Thus, for our purposes, to
specify an $(n,n)$-dimensional entanglement-assisted strategy for communicating
a single bit across $\channel$, it suffices to specify 
a collection of binary POVMs 
$\left\{ \left\{ B^0_y, B^1_y\right\}\right\}_{y \in \outputs}$ on $\mathbb{C}^n$ and
a collection of positive semidefinite operators
$\{ \rho_a^x \mid a \in \{ 0, 1 \}, x \in \inputs \}$ on $\mathbb{C}^n$
satisfying
\begin{equation}\label{eq:rhoconstraint}
\sum_x \rho_0^x  =  \sum_x \rho_1^x \qquad \text{and}\qquad
\Tr ( \sum_x \rho_0^x ) = 1.
\end{equation}
The success probability of the protocol
is given by
\begin{align*}
&\frac{1}{2}  \left( \sum_{ y \in \outputs} \channelprob{x}{y} \sum_{x \in \inputs}
\Tr  ( \rho_0^x \bpovm_y^0 )  \right) 
 + \\
& \qquad \frac{1}{2} \left( \sum_{y \in \outputs} \channelprob{x}{y}
\sum_{x \in \inputs} \Tr  ( \rho_1^x \bpovm_y^1 )  \right) .
\end{align*}
Let $B_y = B_y^0$.  Since $B_y^1 = \mathbb{I} - B_y$, the expression above simplifies to
\begin{equation}
\label{keysuccexp}
\frac{1}{2} + \frac{1}{2} \cdot \Tr \left[  \sum_{x \in \inputs}
\left( \rho_0^x - \rho_1^x \right) \sum_{y \in \outputs}
\channelprob{x}{y} B_y \right].
\end{equation}
The quantity $\suc_{\quantclassn{n} } ( \channel )$ is
the maximum of this expression over all 
$n \times n$ Hermitian operators $\{ B_y \}_{y \in \outputs}$ satisfying
$0 \leq B_y \leq \mathbb{I}$ and all $n \times n$ positive semidefinite operators
$\{ \rho_a^x \}_{x \in \inputs, a \in \{ 0 , 1 \}}$ satisfying 
\eqref{eq:rhoconstraint} above.
Applying Lemma~\ref{radiuslemma} yields the desired formula.
\end{proof}

A convexity argument (see section 2 of the supplementary information)
proves the following stronger version of Theorem~\ref{radiustheorem}.

\begin{corollary}
\label{radiuscorollary}
The formula in Theorem~\ref{radiustheorem} holds also
when the maximum is taken only over 
families $\{ B_y \}$ that consist of projections 
on $\mathbb{C}^n$.
\end{corollary}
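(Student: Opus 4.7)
The plan is to exhibit the objective of Theorem~\ref{radiustheorem}, viewed as a function of the tuple $(B_y)_{y \in \outputs}$, as a convex function on a compact convex domain, and then invoke Bauer's maximum principle.

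First I would use Lemma~\ref{radiuslemma} to rewrite, for any family $\{H_x\}_{x \in \inputs}$ of Hermitian operators,
\[
\radius\{H_x\}_x \;=\; \max_{(\lambda_x,\lambda'_x)} \sum_{x \in \inputs} \Tr\bigl((\lambda_x - \lambda'_x)H_x\bigr),
\]
where the maximum ranges over all admissible pairs of families satisfying the constraints of that lemma. Substituting $H_x = \sum_{y} \channelprob{x}{y} B_y$ and interchanging the order of summation presents the map
\[
(B_y)_y \;\longmapsto\; \radius\Bigl\{\sum_{y \in \outputs} \channelprob{x}{y} B_y\Bigr\}_{x \in \inputs}
\]
as a supremum of linear functionals in $(B_y)_y$, and hence as a convex function on the real vector space of tuples of Hermitian operators.

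Next, the feasible set $\prod_{y \in \outputs}\{B \in \herm{\mathbb{C}^n} : 0 \leq B \leq \mathbb{I}\}$ is a compact convex subset of this vector space. Its extreme points are exactly those tuples each of whose entries is an extreme point of the operator interval $[0,\mathbb{I}]$ in $\herm{\mathbb{C}^n}$, and a standard fact identifies the extreme points of $[0,\mathbb{I}]$ with the orthogonal projections on $\mathbb{C}^n$. Bauer's maximum principle then asserts that a convex function attains its maximum over a compact convex set at an extreme point, so the maximization in Theorem~\ref{radiustheorem} is attained on a family $\{B_y\}$ of projections, yielding the corollary.

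The main point requiring care is the convexity claim: the definition $\radius\{H_x\}_x = \min_C \max_x \|H_x - C\|$ does not manifestly give convexity in $(H_x)_x$, since a minimum of convex functions need not be convex. Passing through the dual formulation in Lemma~\ref{radiuslemma} sidesteps this by replacing the outer minimum with a maximum, after which convexity is automatic. The remaining ingredients --- the characterization of extreme points of the operator interval and Bauer's principle --- are classical and applied directly.
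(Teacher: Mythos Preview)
Your proposal is correct and matches the paper's approach, which the paper itself summarizes as ``a convexity argument'' (with details deferred to the supplement): exhibit the objective as convex in $(B_y)_y$, identify the extreme points of $\prod_y [0,\mathbb{I}]$ as tuples of projections, and conclude. One small remark: the original formula $\radius\{H_x\} = \min_C \max_x \|H_x - C\|$ is already convex in $(H_x)_x$, being the partial infimum over $C$ of a \emph{jointly} convex function, so the detour through Lemma~\ref{radiuslemma} is not strictly needed---but it is a clean and valid route.
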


As an example of the utility of Corollary \ref{radiuscorollary}, consider
the channel $\altchannelM$ in Figure~\ref{fig:prevchannel}, which is
defined in \cite{prevedeletal}.
The input alphabet
for $\altchannelM$ is $\{ 1 , 2, 3, 4 \}$, and the output alphabet
is $\{ 1, 2, 3, 4, 5, 6 \}$.  In section 3 of the supplementary information,
we prove that for any $2 \times 2$ projection operators
$P_1, \ldots , P_6$, 
the radius of the set
$\left\{ \sum_{y=1}^6 \channelcuprob{M}{x}{y} P_y \mid x = 1, 2, 3, 4 \right\}$
is no more than $\frac{1}{6} + \frac{1}{3 \sqrt{2}}$.  This maximum
is achieved when $P_1 = 0$, $P_2 = \mathbb{I}$, and 
$\{ P_3, P_4 \}$ and $\{ P_5, P_6 \}$ are two different Pauli measurements.
Therefore,
\[\suc_{\quantclassn{2}} \left( \altchannelM \right) =  
\frac{2}{3} + \frac{1}{3 \sqrt{2}},
\]
and the protocol from \cite{prevedeletal} is optimal for $2$-dimensional
entanglement assistance.
(We note that this generalizes the paper \cite{wb:2011}, which showed the optimality
of \cite{prevedeletal} within a more restricted class of protocols.)
\begin{figure}[h]
\begin{eqnarray*}
\begin{tabular}{r|c|c|c|c|c|c|}
& 1 & 2 & 3 & 4 & 5 & 6 \\
\hline
1 & 1/3 & 0   & 1/3 & 0 & 1/3 & 0  \\
\hline
2 & 1/3 & 0   & 0 & 1/3 & 0 & 1/3 \\
\hline
3 & 0    & 1/3& 1/3 & 0 & 0  & 1/3 \\
\hline
4 & 0    & 1/3& 0    &1/3&1/3 & 0 \\
\hline
\end{tabular}
\end{eqnarray*}
\caption{The channel $\altchannelM$, from~\cite{prevedeletal}.}
\label{fig:prevchannel}
\end{figure}

\paragraph*{Non-signaling devices.}
In order to prove more explicit bounds on the limits of quantum assistance, we first turn our 
attention to assistance by a non-signaling correlation.
The next proposition asserts a formula for the optimal
non-signaling assisted success probability of a channel.
For any finite set of vectors $S \subseteq \mathbb{R}^k$,
let $\radius_1 ( S )$ denote the radius of $S$ under
the $1$-norm.
\begin{proposition}
\label{nsassistprop}
Let $\channelobj$ be a classical channel,
and for each $x \in \inputs$, let $
\chvec_x 
= \{ \channelset ( y \mid x ) \}_{y \in \outputs}
\in \mathbb{R}^\outputs. $
Then,
\begin{eqnarray}
\label{nsassistpropformula}
\suc_{\nsclass} ( \channelobj ) & = & \frac{1}{2} +
\frac{1}{2} \cdot \radius_1 \left\{
\chvec_x \mid x \in \inputs \right\}.
\end{eqnarray}
\end{proposition}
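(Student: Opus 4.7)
The plan is to parameterize non-signaling boxes by two conditional quantities, symmetrize to normalize Bob's output distribution, and recognize the resulting optimization as the LP dual of the Chebyshev radius in $\ell^1$.

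In the natural protocol, Alice inputs her message $a \in \{0,1\}$ to $\assistobj$, transmits the output $x \in \inputs$ through $\channel$, and Bob inputs the received $y$ and outputs his guess. Write $u_x(y) = \boxprob{0, y}{x, 0}$ and $v_x(y) = \boxprob{1, y}{x, 0}$. Using Bob-side non-signaling (so that $\sum_q \boxprob{a, y}{x, q} = D_A(x|a)$ is independent of $y$) and $\sum_y \channelprob{x}{y} = 1$, a short calculation gives
\[
\suc(\channel, \assistobj) = \tfrac{1}{2} + \tfrac{1}{2}\sum_{x, y} \channelprob{x}{y}\bigl[u_x(y) - v_x(y)\bigr].
\]
The relabeling $\boxprobcu{\assistobj'}{r,s}{p,q} := \boxprob{1-r, s}{p, 1-q}$ is non-signaling and has the same success probability, so after averaging $\assistobj$ with $\assistobj'$ we may assume Bob's marginal on $\{0,1\}$ is uniform, giving $\sum_x u_x(y) = \sum_x v_x(y) = 1/2$ for every $y$. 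Alice-side non-signaling together with the freedom in $D_A(\cdot|a)$ is then equivalent to $\sum_x \max_y u_x(y) \leq 1$ and $\sum_x \max_y v_x(y) \leq 1$, and conversely any pair $(u,v)$ satisfying these constraints extends to a valid non-signaling box.

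Since the constraints on $u$ and $v$ decouple, the supremum splits as $\sup_u \sum \chvec_x(y) u_x(y) - \inf_v \sum \chvec_x(y) v_x(y)$. Dualizing $u_x(y) \leq t_x$ and $\sum_x t_x \leq 1$ (enforcing $\sum_x u_x(y) = 1/2$ by a free multiplier $\lambda(y)$), and using $\sum_y \chvec_x(y) = 1$ to recast positive parts via $\|\chvec_x - \lambda\|_1 = 2\sum_y[\chvec_x(y) - \lambda(y)]^+ - 1 + \sum_y \lambda(y)$, one finds $\sup_u \sum \chvec_x(y) u_x(y) = \tfrac{1}{2} + \tfrac{1}{2}\radius_1\{\chvec_x\}$. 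The affine substitution $\hat v_x(y) = \lambda_x - v_x(y)$ with $\sum_x \lambda_x = 1$ reduces the $v$-infimum to the same $u$-supremum, yielding $\inf_v \sum \chvec_x(y) v_x(y) = \tfrac{1}{2} - \tfrac{1}{2}\radius_1\{\chvec_x\}$. Together these establish $\suc_{\nsclass}(\channel) \leq \tfrac{1}{2} + \tfrac{1}{2}\radius_1\{\chvec_x\}$.

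For the matching lower bound, take any optimizer $\mu_x(y)$ of the radius dual, pad $\lambda_x \geq \max_y \mu_x(y)$ so that $\sum_x \lambda_x = 1$, and define $\assistobj(x, 0|0, y) = \assistobj(x, 1|1, y) = \mu_x(y)$ and $\assistobj(x, 1|0, y) = \assistobj(x, 0|1, y) = \lambda_x - \mu_x(y)$; a routine check confirms this is a valid non-signaling box saturating the bound. The main obstacle is the LP dual computation, in particular the algebraic identification of the dual objective with $\radius_1$, which relies on the normalization $\sum_x u_x(y) = 1/2$ produced by the symmetrization.
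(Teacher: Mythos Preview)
Your argument is essentially correct and takes a genuinely different route from the paper. The paper's proof is almost entirely a citation: it invokes Proposition~14 of \cite{mat:2012}, which already expresses $\suc_{\nsclass}(\channel)$ as a min--max over a vector $c\in\mathbb{R}^{\outputs}$, and then uses the identity $\|u-v\|_1=\sum_i u_i+\sum_i v_i-2\sum_i\min\{u_i,v_i\}$ to rewrite that expression as $\tfrac12+\tfrac12\min_c\max_x\|c-n_x\|_1$. By contrast, you derive the formula \emph{ab initio}: you reduce to the canonical device shape, symmetrize so that Bob's marginal is uniform, observe that the constraints on $u$ and $v$ then decouple, and identify the resulting LP with the $\ell^1$ Chebyshev radius via duality. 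Your approach is longer but self-contained, and in fact recovers the cited result rather than assuming it; the paper's approach is shorter precisely because it offloads the optimization to the reference.

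Two small points are worth tightening. First, you silently assume that restricting to the ``natural protocol'' (Alice's device input is $a$, her output is the channel input $x$; Bob's device input is $y$, his output is the guess) is without loss of generality. This is true---any pre- and post-processing on either side can be absorbed into a new non-signaling device of this shape---but it deserves one sentence. Second, your treatment of the $v$-infimum is terse: the substitution $\hat v_x(y)=\lambda_x-v_x(y)$ really acts on pairs $(v,\lambda)$ in the feasible set $\{0\le v_x(y)\le\lambda_x,\ \sum_x v_x(y)=\tfrac12,\ \sum_x\lambda_x=1\}$, where it is an involution exchanging the infimum and $1$ minus the supremum; once stated this way the conclusion $\inf_v=\tfrac12-\tfrac12\radius_1\{n_x\}$ is immediate. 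With these clarifications, your LP-duality computation (which I checked) and your explicit achieving box are both correct, and in fact the duality step alone already gives equality, making the separate lower-bound construction redundant (though reassuring).
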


Note that in the above formula, we take
the radius of $\{ \chvec_x \}$ \textit{as a subset
of $\mathbb{R}^\outputs$}, not
as a subset of the set of probability distributions
on $\outputs$.

\begin{proof}[Proof of Proposition~\ref{nsassistprop}]
By Proposition~14 from \cite{mat:2012},
\begin{eqnarray*}
\suc_{\nsclass} ( \channelobj ) & = & 
1 - \max_{c \in \mathbb{R}^\outputs}
\min_{x \in \inputs} \sum_{y \in \outputs}
\left( \min \{ c_y , \channelset ( y \mid x ) \}
- c_y/2 \right) \\
& = & \min_{c \in \mathbb{R}^\outputs}
\max_{x \in \inputs} \sum_{y \in \outputs}
\left( 1 - \min \{ c_y , \channelset ( y \mid x ) \} + 
 c_y/2 \right).
\end{eqnarray*}
Using the easily proved fact
that $\left\| u - v \right\|_1
= \sum_i u_i + \sum_i v_i - 2 \sum_i \min \{ u_i, v_i \}$,
the above formula simplifies to
\begin{eqnarray*}
\suc_{\nsclass} ( \channelobj ) 
& = & \min_{c \in \mathbb{R}^\outputs}
\max_{x \in \inputs} \left( \frac{1}{2}
+ \frac{1}{2} \cdot \left\| c - \chvec_x \right\|_1 \right),
\end{eqnarray*}
which implies (\ref{nsassistpropformula}) above.
\end{proof}

Formula (\ref{nsassistpropformula}) allows us to relate
the quantity $\suc_\nsclass ( \channelobj )$
to the quantity $\suc ( \channelobj )$.

\begin{theorem}
\label{thm:mainresultinit}
Let $\channelobj$ be a classical channel,
and let $r = \left| \inputs \right|$ denote the
size of the input alphabet of $\channelobj$.  Then,
\begin{eqnarray}
\label{eq:initbound}
\suc_\nsclass ( \channelobj  ) - \frac{1}{2} & \leq &
\left( 2 - \frac{2}{r} \right) \left[ \suc ( \channelobj ) - \frac{1}{2} \right].
\end{eqnarray}
\end{theorem}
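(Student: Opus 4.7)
The plan is to first derive a closed form for the unassisted success probability $\suc(\channelobj)$ in the same language that Proposition~\ref{nsassistprop} provides for $\suc_\nsclass(\channelobj)$, namely in terms of $\ell_1$-geometry of the family of vectors $\{\chvec_x\}_{x \in \inputs} \subset \mathbb{R}^\outputs$, and then reduce \eqref{eq:initbound} to a purely geometric comparison between the $\ell_1$-radius and the $\ell_1$-diameter of an $r$-point set.

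For the first step, I would observe that any unassisted one-shot protocol is specified by a choice of two inputs $x_0, x_1 \in \inputs$ (encoding the bit) together with a partition $\outputs = D_0 \sqcup D_1$ (Bob's decoder), and the success probability with the optimal decoder equals $\tfrac{1}{2}\sum_{y}\max\{\channelprob{x_0}{y},\channelprob{x_1}{y}\}$. Applying the identity $\max(a,b)=\tfrac{1}{2}(a+b+|a-b|)$ term by term collapses this to $\tfrac{1}{2}+\tfrac{1}{4}\snorm{\chvec_{x_0}-\chvec_{x_1}}_1$, and maximizing over $x_0, x_1$ gives
\[
\suc(\channelobj) \;=\; \frac{1}{2} + \frac{1}{4}\cdot \diameter_1\{\chvec_x \mid x\in\inputs\}.
\]
Combining this with the formula \eqref{nsassistpropformula}, the target inequality \eqref{eq:initbound} becomes the purely geometric statement
\[
\radius_1\{\chvec_x\} \;\leq\; \Bigl(1-\frac{1}{r}\Bigr)\,\diameter_1\{\chvec_x\}.
\]

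For the second step, I would establish the following general fact in any normed space: for any finite set $S=\{v_1,\dots,v_r\}$, the centroid $c=\tfrac{1}{r}\sum_i v_i$ satisfies $\max_i \snorm{v_i-c} \leq (1-1/r)\,\diameter(S)$. Indeed, writing $v_i-c = \tfrac{1}{r}\sum_{j\neq i}(v_i-v_j)$ and applying the triangle inequality together with $\snorm{v_i-v_j}\leq \diameter(S)$ and the fact that only $r-1$ terms are nonzero gives the claimed bound. Since the radius is defined by the best choice of center, this bound also applies to $\radius(S)$. Specializing to the $\ell_1$-norm and the set $\{\chvec_x\}$ (which has exactly $r$ points, one per input) yields the geometric inequality above.

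I do not expect a serious obstacle: the identification of $\suc(\channelobj)$ with an $\ell_1$-diameter is a routine hypothesis-testing calculation, and the centroid bound $\radius\leq(1-1/r)\diameter$ is an elementary triangle-inequality argument that holds in any normed space. The only subtlety worth flagging is that $\radius_1$ is taken over \emph{arbitrary} centers in $\mathbb{R}^\outputs$ rather than over probability distributions (as emphasized in the remark following Proposition~\ref{nsassistprop}); since the centroid of probability vectors is itself a probability vector, this distinction is harmless here, but it makes clear why the proof is allowed to use the centroid as a candidate center.
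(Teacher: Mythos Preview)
Your proposal is correct and follows essentially the same route as the paper: express $\suc(\channelobj)=\tfrac12+\tfrac14\diameter_1\{\chvec_x\}$, combine with Proposition~\ref{nsassistprop}, and bound $\radius_1\le(1-1/r)\diameter_1$ via the centroid triangle-inequality argument. The paper states these steps more tersely but uses exactly the same mean-vector bound you describe.
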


\begin{proof}
Let $\{ \chvec_x \}$ be the vectors defined in
Proposition~\ref{nsassistprop}.
The unassisted one-shot success probability
can be expressed in terms of these vectors like so:
\begin{eqnarray}
\suc ( \channelobj ) & = & \frac{1}{2} +
\frac{1}{4} \cdot \diameter_1 \{ \chvec_x \},
\end{eqnarray}
where $\diameter_1$ denotes diameter
under the $1$-norm.  A triangle-inequality
argument shows that the distance from the mean vector
$\left( \sum_x \chvec_x \right)/r$ to the set $\{ \chvec_x \}$
cannot exceed $\left( 1 - \frac{1}{r} \right) \diameter_1
\{ \chvec_x \}$. Therefore,  $\radius_1 \{ \chvec_x \} \leq \left( 1 - \frac{1}{r} \right) \diameter_1 \{ \chvec_x \}$,
which implies the desired result.
\end{proof}

Theorem \ref{thm:mainresultinit} is the best possible in the sense that there are channels where equality is
achieved in \eqref{eq:initbound}.  Consider the following example, which is a generalization of the 
channel $\altchannelM$ from Figure \ref{fig:prevchannel}.
Let $s$ be a positive integer.  For any $i \in \{ 0, 1, 2, \ldots, 2^s - 1 \}$,
let $b_i \in \mathbb{F}_2^s$ denote the binary representation of $i$.
Define a channel $\altchannel$ as follows.  The input alphabet
of $\altchannel$ is $\{ 0, 1, 2, \ldots, 2^s - 1 \}$, 
and the output alphabet is $\{ 1, 2, \ldots, 2^s - 1 \} \times \{ 0 , 1 \}$.
On given input $i$, the channel chooses an element $j \in 
\{ 1, \ldots, 2^s - 1 \}$ uniformly at random and outputs
the pair $(j , b_i \cdot b_j )$ (where $b_i \cdot b_j$ denotes
the inner product of $b_i$ and $b_j$ mod 2).

For any $i \in \{ 0,  1, 2, \ldots, 2^s - 1 \}$, let
$\ell_i \in \mathbb{R}^{2 ( 2^s - 1 )}$ denote the probability
vector which expresses the output of $\altchannel$
on input $i$.  It is easy to see that the diameter of
$\{ \ell_i \}$ is $2^s / (2^s - 1 )$, and thus 
$\suc ( \altchannel ) = \frac{1}{2} + 2^{s-2}/(2^s - 1 )$.
On the other hand, the radius of $\{ \ell_i \}$ is $1$,
as can be seen from the following calculation.  For
any $c \in \mathbb{R}^{2 (2^s - 1 )}$,
\begin{align*}
&\max_{0 \leq i \leq 2^s - 1}
\left\| \ell_i - c \right\|_1
\geq 2^{-s} \sum_{i = 0}^{2^s - 1}
\left\| \ell_i - c \right\|_1 \\
&\  = 2^{-s} \sum_{\substack{1 \leq j \leq 2^s - 1 \\
t \in \{ 0, 1 \} }} \left[  2^{s-1} \left| c_{jt} - (2^s - 1)^{-1} \right|
+ 2^{s-1} \left| c_{jt} - 0\right| \right] \\
&\ \geq  2^{-s}  \sum_{\substack{1 \leq j \leq 2^s - 1 \\
t \in \{ 0, 1 \} }} \left[ 2^{s-1} (2^s - 1)^{-1} \right] = 1.
\end{align*}
Thus $\suc_{\nsclass} ( \altchannel ) = 1$.  (And, indeed,
a perfect communication protocol for $\altchannel$ exists--see
section 4 of the supplementary information.) 
The channel $\altchannel$ achieves equality in (\ref{eq:initbound}).

The following modifiied version of Theorem \ref{thm:mainresultinit} will be
useful in our analysis of entanglement assistance.

\begin{theorem}
\label{thm:mainresult}
Let $\channel$ be a classical channel, and
let $\assist$ be a non-signaling correlation arising from a two-part
device $( \assist_A , \assist_B )$.  Let $m$ denote the size of the output
alphabet of $\assist_A$.  Then,
\begin{eqnarray}
\suc ( \channelobj , \assist  ) - \frac{1}{2} & \leq &
\left( 2 - \frac{1}{m} \right) \left[ \suc ( \channelobj ) - \frac{1}{2} \right].
\end{eqnarray}
\end{theorem}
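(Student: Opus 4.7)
My plan is to reduce Theorem~\ref{thm:mainresult} to the already-proved Theorem~\ref{thm:mainresultinit} via a sub-channel argument. The key observation is that in any protocol using assistance $\assist$, Alice's input to the channel $\channel$ is precisely (a function of) the output of $\assist_A$, which lies in an alphabet of size $m$. Thus the protocol effectively employs only at most $m$ distinct inputs of $\channel$. More formally, without loss of generality identify the output alphabet of $\assist_A$ with a subset $\inputs' \subseteq \inputs$ of size at most $m$ (absorbing any classical post-processing into $\assist_A$, which preserves non-signaling), and let $\channel'$ denote the restriction of $\channel$ to $\inputs'$, keeping the same conditional probabilities. The device $\assist$ then serves as a valid non-signaling assistance for $\channel'$, and $\suc(\channel, \assist) = \suc(\channel', \assist) \leq \suc_{\nsclass}(\channel')$.

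Applying Theorem~\ref{thm:mainresultinit} to $\channel'$, whose input alphabet has size at most $m$, yields
$$\suc_{\nsclass}(\channel') - \frac{1}{2} \leq \left(2 - \frac{2}{m}\right)\left[\suc(\channel') - \frac{1}{2}\right].$$
Since restricting the input alphabet cannot increase the unassisted one-shot success probability, $\suc(\channel') \leq \suc(\channel)$. Chaining these inequalities together with the trivial bound $2 - 2/m \leq 2 - 1/m$ yields the claim. In fact, this argument produces a slightly stronger inequality with $2 - 2/m$ in place of $2 - 1/m$.

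There is no real obstacle: the heavy lifting has already been done in Theorem~\ref{thm:mainresultinit}, and the reduction is essentially bookkeeping. The only point requiring care is the justification that $\assist$ qualifies as a non-signaling assistance for the sub-channel $\channel'$, which follows immediately from the definitions once Alice's output alphabet is identified with $\inputs'$, since the non-signaling partial-sum conditions on $\assist$ are preserved verbatim under the relabeling.
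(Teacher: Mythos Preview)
Your reduction to Theorem~\ref{thm:mainresultinit} is the same idea the paper uses, but there is a genuine gap in your counting. You assert that Alice's input to $\channel$ is a function only of the output of $\assist_A$, so that the protocol touches at most $m$ inputs of $\channel$. This is not true in general: in a deterministic $\assist$-assisted protocol, Alice chooses her channel input as a function of \emph{both} the bit $a\in\{0,1\}$ she wishes to send \emph{and} the output $p$ of $\assist_A$. ``Absorbing the post-processing into $\assist_A$'' does not help, because that post-processing depends on $a$ (equivalently, on Alice's input to $\assist_A$); after absorption the effective output alphabet of Alice's side has size up to $2m$, not $m$. The paper's proof makes exactly this observation: a deterministic protocol uses at most $2m$ channel inputs, so one restricts to a sub-channel with $r=2m$ and applies Theorem~\ref{thm:mainresultinit} to obtain the factor $2-2/(2m)=2-1/m$.

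Your ``slightly stronger'' conclusion with factor $2-2/m$ is in fact false. For $m=2$ it would give $\suc(\channel,\assist)\le\suc(\channel)$ for every non-signaling device with binary Alice output, meaning such devices can never help. But the Prevedel \emph{et al.}\ channel $\altchannelM$ together with a binary quantum device (hence $m=2$) achieves
\[
\suc(\altchannelM,\assist)-\tfrac12=\left(\tfrac12+\tfrac{1}{\sqrt{2}}\right)\bigl[\suc(\altchannelM)-\tfrac12\bigr]>\suc(\altchannelM)-\tfrac12,
\]
as noted after Corollary~\ref{bquantcor}. (Even more simply, $m=1$ already breaks your bound: then $2-2/m=0$, forcing $\suc(\channel,\assist)\le\tfrac12$, whereas ignoring the device gives $\suc(\channel,\assist)\ge\suc(\channel)$.) Once you replace $m$ by $2m$ in the sub-channel argument, your proof coincides with the paper's.
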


\begin{proof}
A protocol for communicating a single bit $a$ using $\channelobj$ and
$\assist$ proceeds as follows. Alice uses $a$ to choose an input to $\assist_A$, and
then uses $a$ and the output of $\assist_A$ to choose an input to $\channelobj$.
Bob uses the output of $\channelobj$ to choose an input to $\assist_B$, 
and then uses the outputs of $\channelobj$ and $\assist_B$ together to guess
the bit $a$.

The optimal success probability $\suc ( \channelobj , \assist )$ can be
achieved by a \textit{deterministic} protocol (i.e., a protocol in
which Alice and Bob make their choices according to deterministic functions).
As there are only $2m$ possible inputs that Alice could make
to $\channelobj$ in a deterministic protocol, the success probability
of such a protocol is bounded by $(2 - 2/(2m) ) \suc (\channelobj)$
by Theorem~\ref{thm:mainresultinit}.
\end{proof}

\paragraph*{Binary quantum devices.}
Finally, we will use our bounds for non-signaling devices
to obtain bounds for assistance by binary quantum devices.  

A two-part device $\assist$ is \textit{local-deterministic}
if the output of each part is a deterministic function of its input.
A non-signaling correlation
is \textit{local} if it is a convex combination of local-deterministic
correlations.
We define the \textit{local fraction} of a
non-signaling correlation, a concept which is used in \cite{elitzuretal},
\cite{barrettetal}. 

\begin{definition}
Let $\assistset$ be a non-signaling correlation.  The local fraction
of $\assistset$, denoted $\loc (\assistset)$, is the largest real number $\alpha \in [ 0, 1 ]$ such that there exists
a decomposition
\begin{equation}
\label{locdecomp}
\assist = \alpha L + (1 - \alpha) F,
\end{equation}
where $L$ is a local correlation and $F$ is a non-signaling correlation.
\end{definition}

For any classical channel $\channelobj$, it is easy to see that when a decomposition such as \eqref{locdecomp}
exists with $L$ local and $F$ non-signaling,
\begin{eqnarray*}
\suc ( \channel , \assist ) & \leq & \alpha \suc ( \channel , L ) + ( 1 - \alpha ) \suc ( \channel , F ) \\
& \leq & \alpha \suc ( \channel )  + ( 1 - \alpha ) \suc_{\nsclass} ( \channel ).
\end{eqnarray*}
This implies the following stronger version of Theorem~\ref{thm:mainresult}.
\begin{theorem}
\label{thm:mainresult2}
Let $\channel$ be a channel, and let $\assist$ be a non-signaling correlation arising
from a two-part device $(\assistobj_A , \assistobj_B )$.
Let $m$ denote the size of the output alphabet of $\assistobj_A$.  Then
\[
	\frac{\suc ( \channel, \assist ) - \frac{1}{2}}{\suc ( \channel ) - \frac{1}{2}} \leq  1 +  \left( 1 -  \frac{1}{m} \right)(1 - \loc(\assist)). \qed
\]
\end{theorem}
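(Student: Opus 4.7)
The plan is to combine the convexity-based bound stated in the paragraph preceding the theorem with the non-signaling bound from Theorem~\ref{thm:mainresult}. The author's remark shows that for any decomposition $\assist = \alpha L + (1-\alpha)F$ with $L$ local and $F$ non-signaling,
\[
\suc(\channel,\assist) \leq \alpha\,\suc(\channel) + (1-\alpha)\,\suc(\channel,F),
\]
so it suffices to choose $\alpha = \loc(\assist)$ (any such $\alpha$ works; the maximal one yields the tightest result) and to estimate $\suc(\channel,F)$.

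First, I would observe that the non-signaling component $F$ arises from a two-part device whose $A$-side output alphabet can be taken to be the same as that of $\assist_A$, which has size $m$. Hence Theorem~\ref{thm:mainresult} applies verbatim to $F$ and gives
\[
\suc(\channel,F) - \tfrac{1}{2} \leq \left(2 - \tfrac{1}{m}\right)\!\left[\suc(\channel) - \tfrac{1}{2}\right].
\]
The one subtlety to verify here is that the $m$ in Theorem~\ref{thm:mainresult} is indeed unchanged when one passes from $\assist$ to $F$; this follows from the fact that the input/output alphabets are fixed in any convex decomposition of a correlation.

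Next, I would subtract $1/2$ from both sides of the convex-combination bound and substitute the estimate for $\suc(\channel,F) - 1/2$. Writing $\alpha = \loc(\assist)$, this yields
\[
\suc(\channel,\assist) - \tfrac12 \leq \Bigl[\alpha + (1-\alpha)\bigl(2 - \tfrac{1}{m}\bigr)\Bigr]\!\left[\suc(\channel) - \tfrac12\right].
\]
The coefficient simplifies: $\alpha + (1-\alpha)(2 - 1/m) = 1 + (1-\alpha) - (1-\alpha)/m = 1 + (1-1/m)(1-\alpha)$. Dividing by $\suc(\channel) - 1/2$ (which is positive whenever the statement is non-trivial) delivers the claimed inequality.

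The argument is essentially bookkeeping, so there is no serious obstacle. The only mildly delicate point is justifying that one may freely take $\alpha$ to be the local fraction (using a decomposition achieving the supremum, which exists by compactness of the set of non-signaling correlations with fixed alphabets) and ensuring the output-alphabet parameter $m$ for $F$ matches that of $\assist_A$ when invoking Theorem~\ref{thm:mainresult}.
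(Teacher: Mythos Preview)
Your proposal is correct and follows the paper's own route: combine the convexity bound $\suc(\channel,\assist)\le\alpha\,\suc(\channel)+(1-\alpha)\,\suc(\channel,F)$ with Theorem~\ref{thm:mainresult} applied to the non-signaling component $F$ (whose $A$-side output alphabet is still $m$), then simplify. The paper's text is terser---it writes $\suc_{\nsclass}(\channel)$ rather than $\suc(\channel,F)$ in the displayed bound and leaves the invocation of Theorem~\ref{thm:mainresult} on $F$ implicit---but your explicit version is exactly the intended argument.
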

Thus, to obtain improved upper bounds on $\suc ( \channel, \assist )$ for quantum correlations
$\assist$, it suffices to find lower bounds on the local fractions
of quantum correlations.  
In section 5 of the supplementary material, we use facts about the geometry of quantum and non-signaling correlations~\cite{tsirelson} to prove the following bound for binary quantum correlations.
\begin{proposition}
\label{binquantprop}
Let $\assist$ be a binary 
quantum correlation.  Then
$\loc ( \assist ) \geq 2 - \sqrt{2}.$ 
\end{proposition}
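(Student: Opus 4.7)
The plan is to reduce the proposition to Tsirelson's inequality by first establishing an exact formula for the local fraction of a binary non-signaling correlation. Specifically, I aim to show that for every binary non-signaling correlation $\assist$,
\[
\loc(\assist) \;=\; \min\!\Bigl(1,\ 2 - \tfrac{1}{2}\max_i B_i(\assist)\Bigr),
\]
where $B_1,\ldots,B_8$ are the eight CHSH-type functionals (obtained from the standard CHSH expression by signed permutations of inputs and outputs). Because Tsirelson's inequality~\cite{tsirelson} gives $\max_i B_i(\assist) \le 2\sqrt{2}$ for any binary quantum correlation, the bound $\loc(\assist)\ge 2-\sqrt{2}$ follows immediately from this formula.

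The upper bound $\loc(\assist)\le 2-\tfrac{1}{2}\max_i B_i(\assist)$ is the easy direction. If $\assist=\alpha L+(1-\alpha)F$ with $L$ local and $F$ non-signaling, then linearity of each $B_i$ combined with $B_i(L)\le 2$ and $B_i(F)\le 4$ yields $B_i(\assist)\le 2\alpha+4(1-\alpha)=4-2\alpha$, i.e.\ $\alpha\le 2-B_i(\assist)/2$ for every $i$.

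For the matching lower bound I plan to argue via Minkowski sums of polytopes. Let $\mathcal{L}$ and $\mathcal{NS}$ denote the binary local and non-signaling polytopes. By the support-function characterization, a point $\assist$ lies in $\alpha\mathcal{L}+(1-\alpha)\mathcal{NS}$ if and only if $f(\assist) \le \alpha\sup_{\mathcal{L}}f + (1-\alpha)\sup_{\mathcal{NS}}f$ for every linear functional $f$, and for polytopes it suffices to check this on facet-defining $f$. The crucial structural fact is that in the binary case the only facets of $\mathcal{L}$ not already shared with $\mathcal{NS}$ are the eight CHSH inequalities $B_i\le 2$; the remaining facets (positivity together with the normalization and marginal equalities) are common to $\mathcal{L}$ and $\mathcal{NS}$. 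On each shared facet the Minkowski-sum bound reduces to the $\mathcal{NS}$ constraint itself, while on each $B_i$ it becomes $B_i(\assist)\le 4-2\alpha$. Setting $\alpha=2-\max_i B_i(\assist)/2$ (capped at $1$) therefore places $\assist$ inside $\alpha\mathcal{L}+(1-\alpha)\mathcal{NS}$, producing a decomposition $\assist=\alpha L+(1-\alpha)F$ that witnesses $\loc(\assist)\ge\alpha$.

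The main obstacle is precisely this polytope input: identifying the facets of the binary local polytope. This is a classical result of Fine which can also be read off from the explicit list of 24 facets ($16$ positivity plus $8$ CHSH) of the $(2,2,2,2)$ local polytope. Once it is invoked, the support-function values $\sup_{\mathcal{L}}B_i=2$ and $\sup_{\mathcal{NS}}B_i=4$ (the latter attained at the PR box maximally violating $B_i$) are immediate, and Tsirelson's bound closes the proof.
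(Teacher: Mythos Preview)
Your proposal is correct. Both your argument and the paper's rest on the same two structural inputs---Fine's characterization of the binary local polytope by the eight CHSH facets, and Tsirelson's bound---but they are assembled differently. The paper proceeds constructively: it first uses the fact that any two distinct PR boxes average to a local correlation to write an arbitrary non-signaling $D$ as $\alpha L+(1-\alpha)P$ with $L$ local and $P$ a \emph{single} PR box, and then slides mass from $P$ into $L$ until the corresponding CHSH value of the local part saturates at~$2$; the resulting weight is exactly $2-f_j(D)/2$, and Tsirelson forces this to be at least $2-\sqrt{2}$. Your route is non-constructive but cleaner: you identify $\alpha\mathcal{L}+(1-\alpha)\mathcal{NS}$ via support functions, note that the facet normals of the Minkowski sum lie among those of the summands, and observe that the only $\mathcal{L}$-facets not already facets of $\mathcal{NS}$ are the CHSH ones, so membership reduces to $\max_i B_i(D)\le 4-2\alpha$. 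This immediately yields the exact formula $\loc(D)=\min\bigl(1,\,2-\tfrac12\max_i B_i(D)\bigr)$, which is slightly stronger than what the paper states (though it is implicit in the paper's argument as well). The paper's version has the advantage of producing an explicit witness decomposition without invoking Minkowski-sum facet theory; yours has the advantage of transparency and of making the exact dependence of $\loc$ on the CHSH violation manifest.
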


Combining Theorem~\ref{thm:mainresult2} and Proposition~\ref{binquantprop} yields
the following.

\begin{corollary}
\label{bquantcor}
For any classical channel $\channel$, 
\[
\frac{ \suc_{\quantclassbin} ( \channel )  - \frac{1}{2} }{ \suc ( \channel ) - \frac{1}{2} }
\leq \frac{1}{2} + \frac{1}{\sqrt{2}}. 
\]
\end{corollary}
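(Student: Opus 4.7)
The plan is to combine the two results exactly as the text suggests. The left-hand side of the corollary is a supremum of the ratio $(\suc(\channel, \assist) - \tfrac{1}{2})/(\suc(\channel) - \tfrac{1}{2})$ over all binary quantum correlations $\assist$, so it suffices to establish the stated bound for a single arbitrary binary quantum correlation $\assist$ and then take the supremum.

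First I would fix a binary quantum correlation $\assist$ arising from a two-part device $(\assist_A, \assist_B)$. By the definition of ``binary'' in the preliminary section, the output alphabet of $\assist_A$ has size $m = 2$. Substituting $m = 2$ into Theorem~\ref{thm:mainresult2} gives
\[
\frac{\suc(\channel, \assist) - \tfrac{1}{2}}{\suc(\channel) - \tfrac{1}{2}} \leq 1 + \left(1 - \tfrac{1}{2}\right)(1 - \loc(\assist)) = 1 + \tfrac{1}{2}\bigl(1 - \loc(\assist)\bigr).
\]
Next I would apply Proposition~\ref{binquantprop}, which asserts $\loc(\assist) \geq 2 - \sqrt{2}$, so that $1 - \loc(\assist) \leq \sqrt{2} - 1$. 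Plugging this in and simplifying yields
\[
\frac{\suc(\channel, \assist) - \tfrac{1}{2}}{\suc(\channel) - \tfrac{1}{2}} \leq 1 + \tfrac{1}{2}(\sqrt{2} - 1) = \tfrac{1}{2} + \tfrac{1}{\sqrt{2}}.
\]
Since this bound is uniform in $\assist \in \quantclassbin$, taking the supremum over such correlations gives the claimed bound on $\suc_{\quantclassbin}(\channel)$.

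Because the two cited results do all of the heavy lifting, there is no real obstacle here; the only thing to be careful about is checking that ``binary'' in the hypothesis of Proposition~\ref{binquantprop} matches the $m = 2$ instantiation of Theorem~\ref{thm:mainresult2}, i.e., that the output alphabet of $\assist_A$ indeed has size two. This follows from the paper's definition of a binary device as one whose input and output alphabets both have size $2$.
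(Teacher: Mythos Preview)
Your proposal is correct and follows exactly the route the paper intends: instantiate Theorem~\ref{thm:mainresult2} with $m=2$, apply Proposition~\ref{binquantprop} to bound $1-\loc(\assist)\le\sqrt{2}-1$, and take the supremum over $\assist\in\quantclassbin$. The paper presents no further detail beyond ``combining'' these two results, so your write-up is a faithful expansion of the intended argument.
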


Note that equality occurs in Corollary~\ref{bquantcor} for
the case discussed in \cite{prevedeletal}.

\paragraph*{Conclusion.}
We have given a formula for the $n$-dimensional entanglement-assisted one-shot success probability
of a classical channel, and have shown its utility by using it to show that the protocol in~\cite{prevedeletal} is optimal.
We derived a more explicit bound on the advantage gained by binary quantum correlations (which is an equality in the case of \cite{prevedeletal}).  Along the
way, we established a bound on the advantage gained by non-signaling assistance and provided an example where equality is achieved.

Future research could explore methods for evaluating the formula from Theorem~\ref{radiustheorem}.
(Section 3 of the supplementary information provides methods which might generalize.)
Also, it would be interesting to try to prove stronger bounds on the increase in
$\suc ( \channel )$ that is provided by entanglement.  (This might involve generalizations of Proposition~\ref{binquantprop}.)
Another natural next step would be to consider the one-shot success probability for non-binary messages.

The authors would like to thank Vincent Russo for his help with the preparation and editing
of this paper.  We also thank Shmuel Friedland, Aubrey da Cunha, Xiaodi Wu, and Kim Winick for many 
useful discussions, and the anonymous PRL reviewers for helpful comments.  This research was supported in part by
the National Basic Research Program of China under Awards 2011CBA00300 and 2011CBA00301,
and the NSF of the United States under Award 1017335.

\bibliographystyle{apsrev}
\bibliography{NS_assist}

\end{document}


\title{Supplementary Information}

\maketitle

\section{The radius of a set of Hermitian operators}

\begin{lemma}
For any finite set $\{ H_i \}_{i \in \mathcal{I}}$ of Hermitian operators
on a finite-dimensional Hilbert space $V$, the radius of $\{ H_i \}$ is
equal to
\begin{eqnarray*}
 \max_{\substack{\lambda_i \geq 0, \lambda'_i \geq 0 \\ \sum \lambda_i
= \sum \lambda'_i \\ \Tr ( \sum \lambda_i ) = 1/2}} \left[ \sum_{i \in \mathcal{I}}
\Tr \left( ( \lambda_i - \lambda'_i ) H_i  \right) \right].
\end{eqnarray*}
\end{lemma}

\begin{proof}
Any family of Hermitian operators $\{H_i\}$ may be translated
to a family $\{ H_i + W \}$ which contains the operator $0$.
This translation does not affect the radius nor the expression from the
statement of the lemma.  Therefore, we may assume that
$\{ H_i \}$ contains $0$.
By definition,
\begin{eqnarray}
\radius \{ H_i \}_i & = & \min_{ \substack{C, r \\ C - H_i \geq - r \mathbb{I} \\
H_i - C \geq - r \mathbb{I}}}  ( r ),
\end{eqnarray}
where the maximization is over Hermitian operators $C$ and real numbers $r$.
Since $0 \in \{ H_i \}$, whenever the constraints
in this maximization are satisfied we have in particular
that $C \geq - r \mathbb{I}$.
Letting $Z = C + r \mathbb{I}$, we obtain
the following alternate expression:
\begin{eqnarray}
\radius \{ H_i \}_i & = & \min_{ \substack{Z, r \\ Z \geq H_i  \\ -Z
+ 2 r \mathbb{I} \geq - H_i \\
}}  ( r ).
\end{eqnarray}
By semidefinite programming duality, this is equivalent to
\begin{eqnarray*}
\radius \{ H_i \}_i & = & \max_{ \substack{\lambda_i \geq 0, \lambda'_i \geq 0, \\ \sum_i \lambda_i
- \sum_i \lambda'_i \leq 0 \\ 2 \Tr ( \sum \lambda'_i ) \leq 1}} \left[ \left( \sum \Tr ( \lambda_i H_i )
- \sum \Tr ( \lambda'_i H_i ) \right) \right].
\end{eqnarray*}
It is easy to see that this maximum is achieved by a
pair of families $\{ \lambda_i \}, \{ \lambda'_i \}$ satisfying $\sum \lambda_i = \sum \lambda'_i$ and
$2 \Tr ( \sum \lambda'_i ) = 1$.
\end{proof}

\section{The proof of Corollary 3 in the main text}

\label{cor3thm2sec}

The radius function is convex in the following sense:
for any familes of operators $\{ J_y \}_{y \in \outputs}$
and $\{ K_y \}_{y \in \outputs}$, and real number
$\alpha \in [ 0, 1 ]$, 
\begin{align*}
&\radius \{ \alpha J_y + (1-\alpha ) K_y \}_y \\
&\qquad \leq \alpha \radius \{ J_y \}_y + (1 - \alpha ) 
\radius \{ K_y \}_y.
\end{align*}
(For, if we let $J'$ be such that the distance
from $J'$ to $\{ J_y \}_y$ is equal to $r := \radius \{ J_y \}_y$,
and we let $K'$ be such that the distance
from $K'$ to $\{ K_y \}_y$ is equal to $r' := \radius \{ K_y \}_y$,
then the distance from $\alpha J' + (1 - \alpha ) K'$
to $\{ \alpha J_y + ( 1 - \alpha ) K_y \}_y$ is no more
than $\alpha r + (1 - \alpha ) r'$ by the triangle inequality.)
In particular, this convexity property implies that 
the radius of $\{ \alpha J_y + ( 1 - \alpha ) K_y \}_y$ is no 
more than the maximum of $\radius \{ J_y \}_y$ and $\radius \{ K_y \}_y$.

Since any Hermitian operator $B$ satisfying $0 \leq B \leq \mathbb{I}$ is
a convex combination of projection operators, 
Corollary 3 follows from Theorem 2.

\section{An example calculation}

Let $\altchannelM$ be the channel defined in figure 2 in 
the main text.  In this section we will use
Theorem 2 from the main text to calculate the quantity 
$\suc_{\quantclassn{2}} ( \altchannelM )$.  First, we will 
prove the following lemma which provides a simplified formula 
for $\suc_{\quantclassn{n}} ( \altchannelM )$.
For any projection operator $P$, let $P^\perp$ denote projection
onto the orthogonal complement of $P$.
\begin{lemma}
\label{simplifiedradiuslemma}
For any $n \geq 1$, the quantity $\suc_{\quantclassn{n}} ( \altchannelM )$
is equal to
\begin{eqnarray*}
& & \frac{1}{2} + \left( \frac{1}{3} \right)
\max_{X, Y, Z } \left( \radius 
\left\{ X + Y +Z  , X + Y^\perp + Z^\perp ,
X^\perp + Y + Z^\perp , X^\perp + Y^\perp + Z \right\} \right),
\end{eqnarray*}
where the maximum is taken over all projection operators $X, Y, Z$ on
$\mathbb{C}^n$.
\end{lemma}

\begin{proof}
For any Hermitian operators $B_1, B_2, B_3, B_4, B_5, B_6$ on $\mathbb{C}^n$,
let
\begin{eqnarray*}
F ( B_1, B_2, B_3, B_4, B_5, B_6 )
\end{eqnarray*}
be equal to the quantity
\begin{eqnarray*}
& & \radius \left\{ B_1 + B_3 + B_5, 
B_1 + B_4 + B_6, B_2 + B_3 + B_6 , B_2 + B_4 + B_5 \right\}.
\end{eqnarray*}
By the formula from Theorem 2 in the main text,
\begin{eqnarray}
\suc_{\quantclassn{n}} ( \altchannelM ) & = &
\frac{1}{2} + \left( \frac{1}{3} \right) \max_{0 \leq B_i \leq \mathbb{I} } 
F ( B_1, B_2, B_3, B_4, B_5, B_6 ).
\end{eqnarray}
Let
\begin{eqnarray}
m & = & \max_{0 \leq B_i \leq \mathbb{I} } F ( B_1, B_2, B_3, B_4, B_5, B_6 ).
\end{eqnarray}
It suffices to prove that this maximum is achieved by some
$6$-tuple of the form $(X, X^\perp , Y ,  Y^\perp , Z, Z^\perp )$, 
where $X$, $Y$, and $Z$ are projections.

As noted in section~\ref{cor3thm2sec} of the supplementary information,
the radius function is convex in the sense that
if $(H_1, H_2, H_3, H_4 )$ and $(H'_1, H'_2, H'_3 , H'_4 )$ are Hermitian
operators and $\alpha \in [ 0, 1 ]$ is a real number,
\begin{eqnarray}
\radius \{ \alpha H_i + (1 - \alpha ) H'_i \}_i & \leq &
\alpha \radius \{  H_i \}_i +
(1 - \alpha ) \radius \{  H'_i \}_i.
\end{eqnarray}
It follows easily by linearity that a similar convexity property
holds for $F$: for any Hermitian operators $B_1, \ldots , B_6 $
and $B'_1 , \ldots, B'_6 $, and any $\alpha \in [ 0, 1 ]$,
\begin{eqnarray*}
& &F ( \alpha B_1 + (1 - \alpha ) B'_1 , \ldots, \alpha B_6 + (1 - \alpha ) B'_6 ) \\
\nonumber & \leq &  \alpha F ( B_1, \ldots, B_6 ) + (1 - \alpha ) F ( B'_1, \ldots, B'_6 ).
\end{eqnarray*}
In particular,
\begin{eqnarray}
\label{convexityimp}
& &F ( \alpha B_1 + (1 - \alpha ) B'_1 , \ldots, \alpha B_6 + (1 - \alpha ) B'_6 ) \\
\nonumber & \leq &  \max \left\{  F ( B_1, \ldots, B_6 ), F ( B'_1, \ldots, B'_6 ) \right\}.
\end{eqnarray}
Additionally, $F$ is translation-invariant in the following sense:
for any Hermitian operators $B_1, \ldots, B_6$,
and any Hermitian operators $K$, $L$, and $M$,
\begin{eqnarray}
\label{translationinvariance}
F ( B_1 +  K , B_2 + K , B_3 + L , B_4 + L, B_5 + M,
B_6 + M) & = & F ( B_1, \ldots, B_6 ).
\end{eqnarray}

Let $X_1, X_2, Y_1, Y_2, Z_1, Z_2$ be Hermitian operators satisfying $0 \leq X_i, Y_i, Z_i \leq \mathbb{I}$
such that $F ( X_1, X_2, Y_1, Y_2, Z_1, Z_2 ) = m$.  Let $X_+$ and $X_-$ be a pair of 
positive semidefinite operators having mutual orthogonal supports which are such that
\begin{eqnarray}
X_1 - X_2 = X_+ - X_-.
\end{eqnarray}
Define $Y_+, Y_-, Z_+, Z_-$ similarly.  
By property (\ref{translationinvariance}) above,
\begin{eqnarray}
F ( X_+ , X_-, Y_+ , Y_-, Z_+, Z_- ) & = & F ( X_1, X_2, Y_1, Y_2, Z_1 , Z_2 ) \\
\nonumber & = & m.
\end{eqnarray}

The pair $(X_+, X_-)$ can be expressed as a convex combination of pairs of
projections $(P_1^{(i)} , P_2^{(i)} )$ where for each $i$, the support of $P_1^{(i)}$
is orthogonal to $P_2^{(i)}$.  A similar decomposition exists for $(Y_+, Y_-)$ and
$(Z_+, Z_- )$.  Therefore by property (\ref{convexityimp}) above, there
exist pairs of projections $(P_1, P_2)$, $(Q_1, Q_2 )$, $(R_1, R_2 )$, with each pair
having mutually orthogonal supports, such that
\begin{eqnarray}
F ( P_1, P_2, Q_1, Q_2, R_1, R_2 ) = m.
\end{eqnarray}

Let $P_3 = \mathbb{I} - P_1 - P_2$, and define $Q_3$ and $R_3$ similarly.
By (\ref{translationinvariance}),
\begin{eqnarray}
F \left(  P_1 + \frac{P_3}{2} , P_2 + \frac{P_3}{2}, Q_1 + \frac{Q_3}{2}
 , Q_2 + \frac{Q_3}{2} , R_1 + \frac{R_3}{2} , R_2 + \frac{R_3 }{2} \right) & = & m.
\end{eqnarray}
The $6$-tuple on the left hand side of the equation above is
a convex combination of the $6$-tuples
\begin{eqnarray*}
& & \left( P_1 + P_3 , P_2, Q_1 + Q_3 , Q_2 , R_1 + R_3, R_2 \right) \\
& & \textnormal{ and } \left( P_1, P_2 + P_3, Q_1 , Q_2 + Q_3 , R_1, R_2 + R_3 \right).
\end{eqnarray*}
By (\ref{convexityimp}), at least one of these $6$-tuples must achieve the maximum $m$.
This completes the proof.
\end{proof}

\begin{lemma}
For any projection operators $X$, $Y$, $Z$ on the two-dimensional
vector space $\mathbb{C}^2$,
the radius of the set
\begin{eqnarray}
\label{radset}
\left\{ X + Y +Z  , X + Y^\perp + Z^\perp ,
X^\perp + Y + Z^\perp , X^\perp + Y^\perp + Z \right\}.
\end{eqnarray}
is less than or equal to $\frac{1}{2} + \frac{1}{\sqrt{2}}$.
\end{lemma}

\begin{proof}
\textbf{Case 1: The matrices $X$, $Y$, and $Z$ are all scalar matrices.}
In this case, each of $X$, $Y$, and $Z$ is equal to either $0$ or $\mathbb{I}$.
This case is trivial, since the radius of the set $\{ 3 \mathbb{I} , \mathbb{I} \}$
is $1$, and the radius of the set $\{ 2 \mathbb{I}, 0 \}$ is $1$.

\vskip0.1in

\textbf{Case 2: Two of the matrices $X, Y, Z$ are scalar matrices and one
is a nonscalar.}  We may assume without loss of generality that $X$ is
the nonscalar matrix.  Then the set (\ref{radset}) is equal to either
\begin{eqnarray}
\left\{ 0 ,  X + \mathbb{I} , 2 \mathbb{I} \right\}
\end{eqnarray}
or
\begin{eqnarray}
\left\{ X , X+ 2 \mathbb{I} , \mathbb{I} \right\}.
\end{eqnarray}
In the former case, the operator-norm
distance from the operator $\mathbb{I}$
to the set $\{ 0, X + \mathbb{I} , 2 \mathbb{I} \}$ is $1$.
In the latter case, the operator-norm distance
from the operator $X + \mathbb{I}$ to the set
$\{ X, X + 2 \mathbb{I} , \mathbb{I} \}$ is $1$.  The desired result follows.

\vskip0.1in

\textbf{Case 3: Exactly one of the matrices $X, Y, Z$ is a scalar matrix.}
We may assume that $X$ and $Y$ are nonscalar matrices and $Z$ is scalar.
Also, by replacing $(X, Y, Z )$ with $(X^\perp, Y , Z^\perp)$ if necessary, we
may assume that $Z = \mathbb{I}$.  

Let $X = \left| x \right> \left< x \right|$ and $Y = \left| y \right> \left< y \right|$
where $x, y \in \mathbb{C}^2$ are unit vectors, and let $\theta = \arccos \left( \left| x \cdot y
\right| \right)$.  Both of the operators
\begin{eqnarray}
X + Y + \mathbb{I} , X^\perp + Y^\perp + \mathbb{I}
\end{eqnarray}
have eigenvalues $\left\{ 2 + \cos \theta, 2 - \cos \theta \right\}$, and
both of the operators
\begin{eqnarray}
X + Y^\perp, X^\perp +  Y
\end{eqnarray}
have eigenvalues $\left\{ 1 + \sin \theta , 1 - \sin \theta \right\}$.  If we let
\begin{eqnarray}
C = \left( \frac{3}{2} + \frac{\cos \theta - \sin \theta }{2} \right) \mathbb{I},
\end{eqnarray}
then the operator norm distance from $C$ to each of the elements
 of (\ref{radset}) is $\frac{1}{2} + \frac{\cos \theta + \sin \theta}{2} \leq
\frac{1}{2}  + \frac{1}{\sqrt{2}}$.

\vskip0.1in

\textbf{Case 4: Each of $X, Y, Z$ is a nonscalar matrix.}  As in case 3,
let $X = \left| x \right> \left< x \right|$ and $Y = \left| y \right> \left< y \right|$
and let $\theta = \arccos \left( \left| x \cdot y
\right| \right)$.

Let
\begin{eqnarray}
C = \mathbb{I} + \left( \frac{1}{2} + \frac{\cos \theta - \sin \theta}{2} \right) 
Z + \left( \frac{1}{2} - \frac{\cos \theta - \sin \theta}{2} \right) Z^\perp.
\end{eqnarray}
Then, the operator norm of the difference
\begin{eqnarray}
( X + Y + Z ) - C = (X + Y) - \left( \frac{3}{2} + 
\frac{ \cos \theta - \sin \theta }{2} \right) \mathbb{I} 
\end{eqnarray}
is $\frac{1}{2} + \frac{\cos \theta + \sin \theta}{2}$, which is
less than or equal to $\frac{1}{2}  + \frac{1}{\sqrt{2}}$.  A similar
calculation shows that the distance from $C$ to each of the
other three elements of set $(\ref{radset})$ is equal
to $\frac{1}{2} + \frac{\cos \theta + \sin \theta}{2}$.  This
completes the proof.
\end{proof}

For any angle $\theta \in \mathbb{R}$,
let $P_\theta \colon \mathbb{C}^2 \to \mathbb{C}^2$
denote projection onto the unit vector $\cos (\theta)  \left| 0 \right>
+ \sin ( \theta ) \left| 1 \right>$.  Consider the set
\begin{eqnarray}
\label{maxradiusset1}
\left\{ P_0 + P_{\pi / 4 } + \mathbb{I} , P_0 + P_{3 \pi / 4 } ,
P_{\pi / 2} + P_{\pi / 4} , P_{\pi / 2 } + P_{3 \pi / 4 } + \mathbb{I} \right\}
\end{eqnarray}
A direct calculation  shows that
the distance from the operator $\left( \frac{3}{2} \right) \mathbb{I}$ to
set (\ref{maxradiusset1}) is $\frac{1}{2} + \frac{1}{\sqrt{2}}$.  The next lemma
asserts that this quantity is in fact the radius of (\ref{maxradiusset1}).

\begin{lemma}
\label{directcalclemma}
The radius of the set 
\begin{eqnarray}
\left\{ P_0 + P_{\pi / 4 } + \mathbb{I} , P_0 + P_{3 \pi / 4 } ,
P_{\pi / 2} + P_{\pi / 4} , P_{\pi / 2 } + P_{3 \pi / 4 } + \mathbb{I} \right\}
\end{eqnarray}
is 
is equal to $\frac{1}{2} + \frac{1}{\sqrt{2}}$.
\end{lemma}

\begin{proof}
For any Hermitian operator $H \colon \mathbb{C}^2 \to \mathbb{C}^2$,
let us write $\overline{H}$ to denote the trace-zero operator $H - (\Tr (H ) ) \mathbb{I} / 2$.
In the proof that follows, we will make use of the following fact:
for any two Hermitian operators $Q, R \colon \mathbb{C}^2 \to \mathbb{C}^2$,
\begin{eqnarray}
\left| \left| Q - R \right| \right| =
\left| \Tr ( Q ) - \Tr ( R ) \right| + \left| \left| \overline{Q} - \overline{R} \right| \right|
\end{eqnarray}

Suppose, for the sake of contradiction, that there exists a Hermitian
operator $Z$ whose distance from each of the elements of set (\ref{maxradiusset1})
is strictly less than $\frac{1}{2} + \frac{1}{\sqrt{2}}$.  Then,
\begin{eqnarray*}
2 \left( \frac{1}{2} + \frac{1}{\sqrt{2}} \right) & > & 
\left\|  ( P_0 + P_{3 \pi / 4 } ) - Z  \right\|  + \left\|  ( P_{\pi / 2 }
+ P_{\pi / 4 } ) - Z \right\| \\
& = & \left\| \left( P_0 + P_{3 \pi / 4 } - \mathbb{I} \right) - \overline{Z} \right\|
+ \left\| \left( P_{\pi /2 } + P_{\pi / 4 } - \mathbb{I} \right) - \overline{Z} \right\|
+ 2 \cdot \left| 2 - \Tr ( Z ) \right|\\
 & \geq &
\left\| \left( P_0 + P_{3 \pi / 4 } \right)
- \left( P_{\pi / 2 } + P_{\pi / 4 } \right) \right\| + 
2 \cdot \left| 2 - \Tr ( Z ) \right| \\
 & = &
\sqrt{2} + 
2 \cdot \left| 2 - \Tr ( Z ) \right| 
\end{eqnarray*}
Therefore, $\Tr (Z ) < \frac{5}{2}$.  Similarly,
\begin{eqnarray*}
2 \left( \frac{1}{2} + \frac{1}{\sqrt{2}} \right) & > & 
\left\|  ( P_0 + P_{ \pi / 4 } + \mathbb{I} ) - Z  \right\|  + \left\|  ( P_{\pi / 2 }
+ P_{3 \pi / 4 } ) - \mathbb{I} \right\| \\
& = & \left\| \left( P_0 + P_{ \pi / 4 } - \mathbb{I} \right) - \overline{Z} \right\|
+ \left\| \left( P_{\pi /2 } + P_{3 \pi / 4 } - \mathbb{I} \right) - \overline{Z} \right\|
+ 2 \cdot \left| 3 - \Tr ( Z ) \right|\\
 & \geq &
\left\| \left( P_0 + P_{\pi / 4 } \right)
- \left( P_{\pi / 2 } + P_{3 \pi / 4 } \right) \right\| + 
2 \cdot \left| 3 - \Tr ( Z ) \right| \\
 & = &
\sqrt{2} + 
2 \cdot \left| 3 - \Tr ( Z ) \right|,
\end{eqnarray*}
which implies $\Tr ( Z ) > \frac{5}{2}$.  This is a contradiction.
\end{proof}

Combining Lemmas \ref{simplifiedradiuslemma}--\ref{directcalclemma}, we
have the following proposition.
\begin{proposition}
The quantity $\suc_{\quantclassn{2}} ( \altchannelM )$ is
equal to $\frac{2}{3} + \frac{1}{3 \sqrt{2}}$.  $\qed$
\end{proposition}

\section{An example of optimal non-signaling assistance}

In this section we discuss an example in which equality occurs in Theorem 5 from the main text.  This
example is a generalization of the protocol from \cite{prevedeletal}.

Let $m$ be a positive integer. Let 
\begin{eqnarray}
\altinputs & = & \mathbb{F}_2^m, \\
\altoutputs & = & \left( \mathbb{F}_2^m \smallsetminus \{ 0 \} \right) \times \mathbb{F}_2.
\end{eqnarray}
Let $\altchannelK$ be a channel defined as follows:
\begin{enumerate}
\setlength{\parskip}{0pt}
\setlength{\itemsep}{0pt}

\item The input alphabet of $\altchannelK$ is 
$\altinputs$, and the output alphabet of $\altchannelK$
is $\altoutputs$.

\item For any given input $\mathbf{v} \in \mathbb{F}_2^m$,
the output of $\altchannelK$ is unformly distributed over the
set
\begin{eqnarray}
\left\{ \left( \mathbf{w}, \mathbf{w} \cdot \mathbf{v} \right) \mid \mathbf{w} \in \mathbb{F}_2^m \smallsetminus
\{ 0 \} \right\}.
\end{eqnarray}
\end{enumerate}
(Here, $\mathbf{w} \cdot \mathbf{v} \in \mathbb{F}_2$ denotes the inner product of
$\mathbf{w}$ and $\mathbf{v}$.)  

Let $(\altassistbox_1 , \altassistbox_2 )$ be a two part
input-output device defined as follows.  (See Figure~\ref{fig:exampledevice}.)
\begin{enumerate}
\setlength{\parskip}{0pt}
\setlength{\itemsep}{0pt}
\item The input alphabet for $\altassistbox_1$
is $\mathbb{F}_2$, and the output alphabet for
$\altassistbox_1$ is $\altinputs$.

\item The input alphabet for $\altassistbox_2$ 
is $\altoutputs$, and the output alphabet
for $\altassistbox_2$ is $\mathbb{F}_2$.

\item If the inputs to $\altassistbox_1$
and $\altassistbox_2$ are $a \in \{ 0, 1 \}$
and
$
(\mathbf{w} , r ) \in \left( \mathbb{F}_2^m \smallsetminus
\{ \mathbf{0} \} \right) \times \mathbb{F}_2,
$
then the output of $\altassistbox_1$ is unformly distributed
over all vectors $\mathbf{a} = \left( a_1, a_2, \ldots , a_m \right)$
that satisfy $a_1 = a$, and the output of $\altassistbox_2$ is
$a \oplus r \oplus \left( \mathbf{w} \cdot \mathbf{a} \right)$.
\end{enumerate}

\begin{figure}
\begin{center}
\begin{tikzpicture}
\node (input1) at (0, 2) {$a$};
\node (input2) at (4, 2) {$\mathbf{w}, r$};
\node[draw, rectangle] (e1) at (0, 1) {$\altassistbox_1$};
\node[draw, rectangle] (e2) at (4, 1) {$\altassistbox_2$};
\node (output1) at (0, 0) {$\mathbf{a} =  ( a, a_2, \ldots , a_m)$};
\node (output2) at (4, 0) {$a \oplus r \oplus \left( \mathbf{w} \cdot \mathbf{a} \right)$};
\draw (input1) edge[->] (e1);
\draw (input2) edge[->] (e2);
\draw (e1) edge[->] (output1);
\draw (e2) edge[->] (output2);
\end{tikzpicture}
\end{center}
\caption{The device $(\altassistbox_1, \altassistbox_2 )$.}
\label{fig:exampledevice}
\end{figure}

It can be checked that the correlation $\altassist$ arising
from $(\altassistbox_1 , \altassistbox_2 )$ is non-signaling.
Additionally, one can see (by substitution) that using
$\altassist$ to assist $\altchannelK$ yields a perfect transmission
of a single bit.  (See figure~\ref{fig:perfectcommunication}.)

\begin{figure}
\begin{center}
\begin{tikzpicture}[scale=0.9]
\node (alice) at (-1,0) {Alice};
\node[draw] (d1) at (2,0) {$\altassistbox_1$};
\node[draw, rectangle] (channel) at (3,-1) {\altchannelK};
\node[draw, rectangle] (d2) at (4,-2) {$\altassistbox_2$};
\node (bob) at (7,-2) {Bob};
\draw (alice) edge[->] node[auto]{$a$} (d1);
\draw (d1) edge[->] node[auto]{$\mathbf{a}$} (channel);
\draw (channel) edge[->] node[auto]{$\left( \mathbf{w} , \mathbf{w} \cdot \mathbf{a} \right)$} (d2);
\draw (d2) edge[->] node[auto]{$a$} (bob);
\end{tikzpicture}
\end{center}
\caption{A perfect communication protocol.}
\label{fig:perfectcommunication}
\end{figure}

Now, let us calculate the quantity $\suc \left( \altchannelK \right)$.
For any two distinct vectors $\mathbf{x}_0 , \mathbf{x}_1 \in
\mathbb{F}_2^m$, the probability that a randomly chosen
vector $\mathbf{w} \in \mathbb{F}_2^m \smallsetminus \{ \mathbf{ 0 } \}$
will satisfy
$\mathbf{w} \cdot \mathbf{x}_0 \neq \mathbf{w} \cdot \mathbf{x_1}$
is equal to $2^{m-1} / \left( 2^m - 1 \right)$.
This fact has the following consequence: if Alice
employs the deterministic
encoding strategy $\left[ 0 \mapsto \mathbf{x}_0 , 1
\mapsto \mathbf{x}_1 \right]$ to send a single bit,
then the optimal probability with which Bob can decode
is
\begin{eqnarray}
& & \left[ \frac{2^{m-1}}{2^m - 1 } \right] (1)
+ \left[ \frac{2^{m-1} - 1}{2^m - 1 }  \right] \left( \frac{1}{2} \right) \\
\label{succexample}
& = & \frac{2^m + 2^{m-1} - 1}{2^{m+1} - 2}.
\end{eqnarray}
Therefore, $\suc \left( \altchannelK \right)$ is equal
to quantity (\ref{succexample}), while
$\suc_\nsclass \left( \altchannelK \right)$ is equal to $1$.
Theorem 5 from the main text asserts the following bound
on $\suc_\nsclass \left( \altchannelK \right)$:
\begin{eqnarray*}
\suc_\nsclass \left( \altchannelK \right) & \leq & \frac{1}{2} + \left( 2 - \frac{2}{2^m} \right) \left[
\suc ( \altchannelK ) - \frac{1}{2} \right] \\
& = & \frac{1}{2} + 2 \left( \frac{2^m - 1}{2^m } \right)
\left( \frac{2^{m-1} } {2^{m+1} - 2} \right) \\
& = & 1.
\end{eqnarray*}
Therefore, equality is achieved in Theorem 5 from the main text
when $\channel= \altchannelK$.

\section{The local fraction of a binary quantum correlation}

In this section, we prove the following proposition from the main text.

\begin{proposition}
Let $\assist$ be a binary 
quantum correlation.  Then
$\loc ( \assist ) \geq 2 - \sqrt{2}.$ 
\end{proposition}

\begin{proof}
For any binary non-signaling correlation $\assistcu{G}$, let
\[
f_1 \left( \assistcu{G} \right)
 =  \sum_{a, x, b, y \in \{ 0, 1 \} }
(-1)^{x \oplus b \oplus ( a \wedge y )} \boxprobcu{G}{ay}{xb}.
\]
This is the function which defines the CHSH inequality \cite{chsh}. 
let $f_2$, $f_3$, and $f_4$ be the functions defined by the same
expression with $a \wedge y$ replaced by $\neg a \wedge y$,
$a \wedge \neg y$, and $\neg a \wedge \neg y$, respectively.

We note the following facts.  (See \cite{tsirelson}.)
\begin{enumerate}
\setlength{\parskip}{0pt}
\setlength{\itemsep}{0pt}

\item \label{localcrit} A non-signaling correlation $\assistcu{G}$ is local 
if and only if $
-2 \leq f_i \left( \assistcu{G} \right) \leq 2
$
for $i = 1, 2, 3, 4$.

\item If $\assistcu{G}$ is a quantum correlation,
then for $i = 1,2,3,4$,
\begin{equation*}
\label{quantbound}
- 2 \sqrt{2} \leq f_i \left( \assistcu{G} \right) \leq 2 \sqrt{2}.
\end{equation*}

\item There are eight non-signaling
correlations $\{ \assistcu{P}_i^+ \}_{i=1}^4$
and $\{ \assistcu{P}_i^- \}_{i=1}^4$,
satisfying
\begin{eqnarray*}
f_j \left( \assistcu{P}_i^\pm \right) & = &
\begin{cases}  \pm 4 & \textnormal{ if }
j = i \\
0 & \textnormal{ otherwise} \end{cases}
\end{eqnarray*}
These are the \em Popescu-Rohrlich \em  (PR) boxes.

\item \label{gendecomp} Every non-signaling correlation
is a convex combination of local correlations
and the eight PR boxes.  Further, for any two distinct PR boxes $P$ and $P'$, the 
correlation $(P + P')/2$ is local.
\end{enumerate}

From the second part of item \ref{gendecomp}, it follows that any
convex combination of local boxes and PR boxes
can be simplified into an expression of the form
$\alpha \assistcu{L} + (1 - \alpha ) \assistcu{Q}$,
where $\assistcu{L}$ is local, $\assistcu{Q}$
is a PR box, and $\alpha \in [0, 1]$.
Any non-signaling correlation can thus
be expressed as a convex combination
of a local correlation and a single PR box.

Let $\assist = \alpha \assistcu{L} + (1 - \alpha) \assistcu{Q}$,
where $\assistcu{L}$ is local and $\assistcu{Q}$
is a PR box.
First suppose that $\assistcu{Q} = \assistcu{P}_j^+$.  
Let
$\assistcu{L}_\beta = (\alpha \assistcu{L}
+ (\beta - \alpha) \assistcu{P}_j^+)/\beta.$
for any $\beta \in [ \alpha , 1 ]$.
Then $\assistcu{L}_\beta$ is local whenever $f_j ( \assistcu{L}_\beta
) \leq 2$.
If $f_j \left( \assistcu{L}_1 \right) < 2$,
then $\assistcu{L}_1 (= \assist)$ is local, and the
proposition follows easily.  Otherwise, there is
a value $\beta \in [\alpha , 1]$ such that $f_j \left( \assistcu{L}_\beta
\right) = 2$.  We have
$\assist  =  \beta \cdot \assistcu{L}_\beta +
(1 - \beta ) \assistcu{P}_j^+.$
The quantity $\beta$ must be at least $2 - \sqrt{2}$,
since otherwise \eqref{quantbound} would be violated.
Therefore $\loc (D ) \geq 2 - \sqrt{2}$.

A similar argument completes the proof
in the case where $Q = P_j^-$.
\end{proof}

\bibliographystyle{plain}
\bibliography{NS_assist}